\documentclass[12pt]{article}%
\usepackage{amsmath}
\usepackage{mitpress}
\usepackage{amsfonts}
\usepackage{amssymb}
\usepackage{graphicx}%
\setcounter{MaxMatrixCols}{30}
\providecommand{\U}[1]{\protect\rule{.1in}{.1in}}
\newtheorem{theorem}{Theorem}

\newtheorem{definition}[theorem]{Definition}
\newtheorem{example}[theorem]{Example}

\newtheorem{notation}[theorem]{Notation}

\newtheorem{remark}[theorem]{Remark}

\newenvironment{proof}[1][Proof]{\noindent\textbf{#1.} }{\ \rule{0.5em}{0.5em}}
\newdimen\dummy
\dummy=\oddsidemargin
\addtolength{\dummy}{72pt}
\marginparwidth=.5\dummy
\marginparsep=.1\dummy
\begin{document}

\title{Defining the Symmetry of the Universal Semi-regular Autonomous Asynchronous Systems}
\author{Serban E. Vlad\\Str. Zimbrului, Nr. 3, Bl. PB68, Ap. 11, 410430, Oradea, Romania, email: serban\_e\_vlad@yahoo.com}
\maketitle

\begin{abstract}
The regular autonomous asynchronous systems are the non-deterministic Boolean
dynamical systems and universality means the greatest in the sense of the
inclusion. The paper gives four definitions of symmetry of these systems in a
slightly more general framework, called semi-regularity and also many examples.

\end{abstract}

\textbf{MSC}: 94C10

\textbf{keywords}: asynchronous system, symmetry

\section{Introduction}

Switching theory has developed in the 50's and the 60's as a common effort of
the mathematicians and the engineers of studying the switching circuits
(=asynchronous circuits) from digital electrical engineering. After 1970 we do
not know to exist any mathematical published work in what we call switching
theory. The published works are written by engineers and their approach is
always descriptive and unacceptable for the mathematicians. The label of
\textit{switching theory} has changed to \textit{asynchronous systems} (or
\textit{circuits}) \textit{theory}. One of the possible motivations of the
situation consists in the fact that the important producers of digital
equipments have stopped the dissemination of such researches.

Our interest in asynchronous systems had bibliography coming from the 50's and
the 60's, as well as engineering works giving intuition, as well as
mathematical works giving analogies. An interesting \textit{rendezvous} has
happened when the asynchronous systems theory has met the dynamical systems
theory, resulting the so called \textit{regular} autonomous systems=Boolean
dynamical systems; the vector field is $\Phi:\{0,1\}^{n}\rightarrow
\{0,1\}^{n},$ time is discrete or real and we obtain the \textit{unbounded
delay model} of computation of $\Phi$, suggested by the engineers. The
\textit{synchronous} iterations of $\Phi:\Phi\circ\Phi,\Phi\circ\Phi\circ
\Phi,...$ of the dynamical systems are replaced by \textit{asynchronous}
iterations in which each coordinate $\Phi_{1},...,\Phi_{n}$ is iterated
independently on the others, in arbitrary finite time.

We denote with $\mathbf{B}=\{0,1\}$ the binary Boolean algebra, together with
the discrete topology and with the usual algebraical laws:%
\begin{align*}
&
\begin{array}
[c]{cc}
& \overline{\ \ }\\
0 & 1\\
1 & 0
\end{array}
,\;\;\;\;\;\;%
\begin{array}
[c]{ccc}%
\cdot & 0 & 1\\
0 & 0 & 0\\
1 & 0 & 1
\end{array}
,\;\;\;\;\;\;%
\begin{array}
[c]{ccc}%
\cup & 0 & 1\\
0 & 0 & 1\\
1 & 1 & 1
\end{array}
,\;\;\;\;\;\;%
\begin{array}
[c]{ccc}%
\oplus & 0 & 1\\
0 & 0 & 1\\
1 & 1 & 0
\end{array}
\\
&  \;\;\;\;\;\;\;\;\;\;\;\;\;\;\;\;\;\;\;\ \ \ \ \ Table\;1
\end{align*}
We use the same notations for the laws that are induced from $\mathbf{B}$ on
other sets, for example $\forall x\in\mathbf{B}^{n},\forall y\in\mathbf{B}%
^{n},$%
\[
\overline{x}=(\overline{x_{1}},...,\overline{x_{n}}),
\]%
\[
x\cup y=(x_{1}\cup y_{1},...,x_{n}\cup y_{n})
\]
etc. In Figure \ref{simetrie30}
\begin{figure}
[ptb]
\begin{center}
\fbox{\includegraphics[
height=1.6008in,
width=4.1027in
]%
{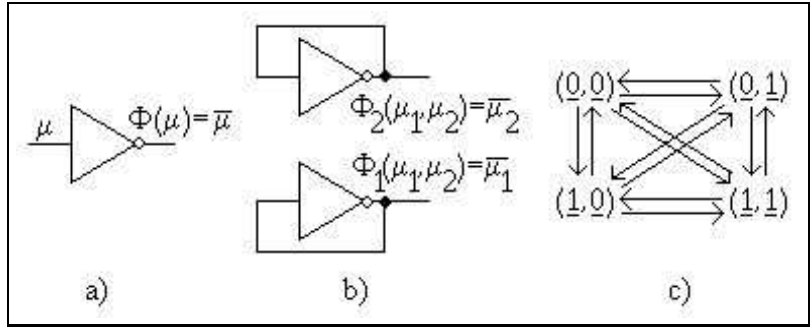}%
}\caption{a) the logical gate NOT, b) circuit with logical gates NOT, c) state
portrait}%
\label{simetrie30}%
\end{center}
\end{figure}
we have drawn at a) the logical gate NOT, i.e. the circuit that computes the
logical complement and at b) a circuit that makes use of logical gates NOT.
The asynchronous system that models the circuit from b) has the state portrait
drawn at c). In the state portraits, the arrows show the increase of (the
discrete or continuous) time. The underlined coordinates \underline{$\mu_{i}$}
are these coordinates for which $\Phi_{i}(\mu_{i})\neq\mu_{i}$ and they are
called \textit{excited}, or \textit{enabled}, or \textit{unstable}. The
coordinates $\mu_{i}$ that are not underlined fulfill by definition $\Phi
_{i}(\mu_{i})=\mu_{i}$ and they are called \textit{not excited}, or
\textit{not enabled}, or \textit{stable}. The existence of two underlined
coordinates in $(0,0)$ shows that $\Phi_{1}(0,0)=1$ may be computed first,
$\Phi_{2}(0,0)=1$ may be computed first, or $\Phi_{1}(0,0),\Phi_{2}(0,0)$ may
be computed simultaneously, thus when the system is in $(0,0),$ it may run in
three different directions, non-determinism.

Our present purpose is to define the symmetry of these systems.

\section{Semi-regular systems}

\begin{notation}
We denote $\mathbf{N}_{\_}=\{-1,0,1,2,...\}$.
\end{notation}

\begin{notation}
$\chi_{A}:\mathbf{R}\rightarrow\mathbf{B}$ is the notation of the
characteristic function of the set $A\subset\mathbf{R}$: $\forall
t\in\mathbf{R},\chi_{A}(t)=\left\{
\begin{array}
[c]{c}%
0,if\ t\notin A\\
1,if\ t\in A
\end{array}
\right.  .$
\end{notation}

\begin{notation}
We denote with $\overline{\Pi}_{n}$ the set of the sequences $\alpha
=\alpha^{0},\alpha^{1},...,$ $\alpha^{k},$ $...\in\mathbf{B}^{n}.$
\end{notation}

\begin{notation}
The set of the real sequences $t_{0}<t_{1}<...<t_{k}<...$ that are unbounded
from above is denoted with $Seq.$
\end{notation}

\begin{notation}
We use the notation $\overline{P}_{n}$ for the set of the functions
$\rho:\mathbf{R}\rightarrow\mathbf{B}^{n}$ having the property that $\alpha
\in\overline{\Pi}_{n}$ and $(t_{k})\in Seq$ exist with $\forall t\in
\mathbf{R},$%
\begin{equation}
\rho(t)=\alpha^{0}\chi_{\{t_{0}\}}(t)\oplus\alpha^{1}\chi_{\{t_{1}\}}%
(t)\oplus...\oplus\alpha^{k}\chi_{\{t_{k}\}}(t)\oplus... \label{sym16}%
\end{equation}

\end{notation}

\begin{definition}
\label{Def39}Let $\Phi:\mathbf{B}^{n}\rightarrow\mathbf{B}^{n}$ be a function.
For $\nu\in\mathbf{B}^{n},\nu=(\nu_{1},...,\nu_{n})$ we define the function
$\Phi^{\nu}:\mathbf{B}^{n}\rightarrow\mathbf{B}^{n}$ by $\forall\mu
\in\mathbf{B}^{n},$%
\[
\Phi^{\nu}(\mu)=(\overline{\nu_{1}}\mu_{1}\oplus\nu_{1}\Phi_{1}(\mu
),...,\overline{\nu_{n}}\mu_{n}\oplus\nu_{n}\Phi_{n}(\mu)).
\]

\end{definition}

\begin{definition}
\label{Def43}Let be $\alpha\in\overline{\Pi}_{n}.$ The function $\widehat
{\Phi}^{\alpha}:\mathbf{B}^{n}\times\mathbf{N}_{\_}\rightarrow\mathbf{B}^{n}$
defined by $\forall\mu\in\mathbf{B}^{n},\forall k\in\mathbf{N}_{\_},$%
\begin{equation}
\left\{
\begin{array}
[c]{c}%
\widehat{\Phi}^{\alpha}(\mu,-1)=\mu,\\
\widehat{\Phi}^{\alpha}(\mu,k+1)=\Phi^{\alpha^{k+1}}(\widehat{\Phi}^{\alpha
}(\mu,k))
\end{array}
\right.  \label{ite2}%
\end{equation}
is called \textbf{discrete time} $\alpha-$\textbf{semi-orbit} \textbf{of}
$\mu.$ We consider also the sequence $(t_{k})\in Seq$ and the function
$\rho\in\overline{P}_{n}$ from (\ref{sym16}), for which the function
$\Phi^{\rho}:\mathbf{B}^{n}\times\mathbf{R}\rightarrow\mathbf{B}^{n}$ is
defined by: $\forall\mu\in\mathbf{B}^{n},\forall t\in\mathbf{R},$%
\begin{equation}
\Phi^{\rho}(\mu,t)=\widehat{\Phi}^{\alpha}(\mu,-1)\chi_{(-\infty,t_{0}%
)}(t)\oplus\widehat{\Phi}^{\alpha}(\mu,0)\chi_{\lbrack t_{0},t_{1})}%
(t)\oplus\label{ite3}%
\end{equation}%
\[
\oplus\widehat{\Phi}^{\alpha}(\mu,1)\chi_{\lbrack t_{1},t_{2})}(t)\oplus
...\oplus\widehat{\Phi}^{\alpha}(\mu,k)\chi_{\lbrack t_{k},t_{k+1})}%
(t)\oplus...
\]
$\Phi^{\rho}$ is called \textbf{continuous time} $\rho-$\textbf{semi-orbit}
\textbf{of} $\mu.$
\end{definition}

\begin{definition}
\label{Def62_}The \textbf{discrete time} and the \textbf{continuous time
universal semi-regular autonomous asynchronous systems} are defined by%
\[
\widehat{\overline{\Xi}}_{\Phi}=\{\widehat{\Phi}^{\alpha}(\mu,\cdot)|\mu
\in\mathbf{B}^{n},\alpha\in\overline{\Pi}_{n}\},
\]%
\[
\overline{\Xi}_{\Phi}=\{\Phi^{\rho}(\mu,\cdot)|\mu\in\mathbf{B}^{n},\rho
\in\overline{P}_{n}\}.
\]

\end{definition}

\begin{remark}
$\widehat{\overline{\Xi}}_{\Phi},$ $\overline{\Xi}_{\Phi}$ and $\Phi$ are
usually identified.
\end{remark}

\begin{example}
\label{Exa2}In Figure \ref{simetrie23} we have drawn at a)
\begin{figure}
[ptb]
\begin{center}
\fbox{\includegraphics[
height=2.3488in,
width=3.8354in
]%
{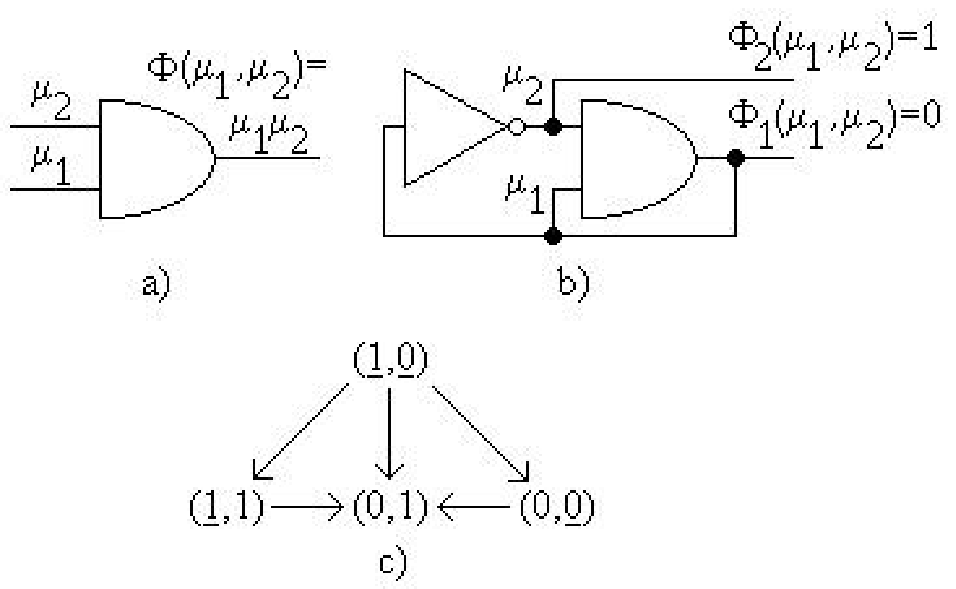}%
}\caption{The semi-regular system $\overline{\Xi}_{\Phi}$ from Example
\ref{Exa2}}%
\label{simetrie23}%
\end{center}
\end{figure}
the AND gate that computes the logical intersection, at b) a circuit with two
gates and at c) the state portrait of $\Phi:\mathbf{B}^{2}\rightarrow
\mathbf{B}^{2},\forall(\mu_{1},\mu_{2})\in\mathbf{B}^{2},\Phi(\mu_{1},\mu
_{2})=(0,1).$ We conclude that%
\[
\overline{\Xi}_{\Phi}=\{(\mu_{1},\mu_{2})\chi_{(-\infty,t_{0})}\oplus(\mu
_{1}\lambda_{1},\mu_{2}\cup\lambda_{2})\chi_{\lbrack t_{0},t_{1})}\oplus
\]%
\[
\oplus(\mu_{1}\lambda_{1}\nu_{1},\mu_{2}\cup\lambda_{2}\cup\nu_{2}%
)\chi_{\lbrack t_{1},\infty)}|\mu,\lambda,\nu\in\mathbf{B}^{2},t_{0},t_{1}%
\in\mathbf{R},t_{0}<t_{1}\},
\]
since the first coordinate might finally decrease its value and the second
coordinate might finally increase its value, but the order and the time
instant when these things happen are arbitrary.
\end{example}

\section{Anti-semi-regular systems}

\begin{definition}
\label{Def15}Let be $\Phi:\mathbf{B}^{n}\rightarrow\mathbf{B}^{n},\alpha
\in\overline{\Pi}_{n},(t_{k})\in Seq$ and $\rho\in\overline{P}_{n}$ from
(\ref{sym16}). The function $^{\ast}\widehat{\Phi}^{\alpha}:\mathbf{B}%
^{n}\times\mathbf{N}_{\_}\rightarrow\mathbf{B}^{n}$ is defined by: $\forall
\mu\in\mathbf{B}^{n},\forall k\in\mathbf{N}_{\_},$%
\begin{equation}
\left\{
\begin{array}
[c]{c}%
^{\ast}\widehat{\Phi}^{\alpha}(\mu,-1)=\mu,\\
\Phi^{\alpha^{k+1}}(^{\ast}\widehat{\Phi}^{\alpha}(\mu,k+1))=\text{ }^{\ast
}\widehat{\Phi}^{\alpha}(\mu,k)
\end{array}
\right.  \label{ite6}%
\end{equation}
and $^{\ast}\Phi^{\rho}:\mathbf{B}^{n}\times\mathbf{R}\rightarrow
\mathbf{B}^{n}$ is defined by: $\forall\mu\in\mathbf{B}^{n},\forall
t\in\mathbf{R},$%
\begin{equation}
^{\ast}\Phi^{\rho}(\mu,t)=\text{ }^{\ast}\widehat{\Phi}^{\alpha}(\mu
,-1)\chi_{(-\infty,t_{0})}(t)\oplus\text{ }^{\ast}\widehat{\Phi}^{\alpha}%
(\mu,0)\chi_{\lbrack t_{0},t_{1})}(t)\oplus\label{ite9}%
\end{equation}%
\[
\oplus\text{ }^{\ast}\widehat{\Phi}^{\alpha}(\mu,1)\chi_{\lbrack t_{1},t_{2}%
)}(t)\oplus...\oplus\text{ }^{\ast}\widehat{\Phi}^{\alpha}(\mu,k)\chi_{\lbrack
t_{k},t_{k+1})}(t)\oplus...
\]
$^{\ast}\widehat{\Phi}^{\alpha}$ is called \textbf{discrete time} $\alpha
-$\textbf{anti-semi-orbit} \textbf{of} $\mu,$ while $^{\ast}\Phi^{\rho}$ is
called \textbf{continuous time} $\rho-$\textbf{anti-semi-orbit} \textbf{of}
$\mu.$
\end{definition}

\begin{remark}
We compare the semi-orbits and the anti-semi-orbits now and see that they run
both from the past to the future, but the relation cause-effect is different:
in $\widehat{\Phi}^{\alpha},\Phi^{\rho}$ the cause is in the past and the
effect is in the future, while in $^{\ast}\widehat{\Phi}^{\alpha},$ $^{\ast
}\Phi^{\rho}$ the cause is in the future and the effect is in the past.
\end{remark}

\begin{definition}
\label{Def17}The \textbf{discrete time} and the \textbf{continuous time
universal anti-semi-regular autonomous asynchronous systems} are defined by%
\[
^{\ast}\widehat{\overline{\Xi}}_{\Phi}=\{^{\ast}\widehat{\Phi}^{\alpha}%
(\mu,\cdot)|\mu\in\mathbf{B}^{n},\alpha\in\overline{\Pi}_{n}\},
\]%
\[
^{\ast}\overline{\Xi}_{\Phi}=\{^{\ast}\Phi^{\rho}(\mu,\cdot)|\mu\in
\mathbf{B}^{n},\rho\in\overline{P}_{n}\}.
\]

\end{definition}

\begin{example}
\label{Exa16}In Figure \ref{simetrie24}
\begin{figure}
[ptb]
\begin{center}
\fbox{\includegraphics[
height=1.2436in,
width=4.1433in
]%
{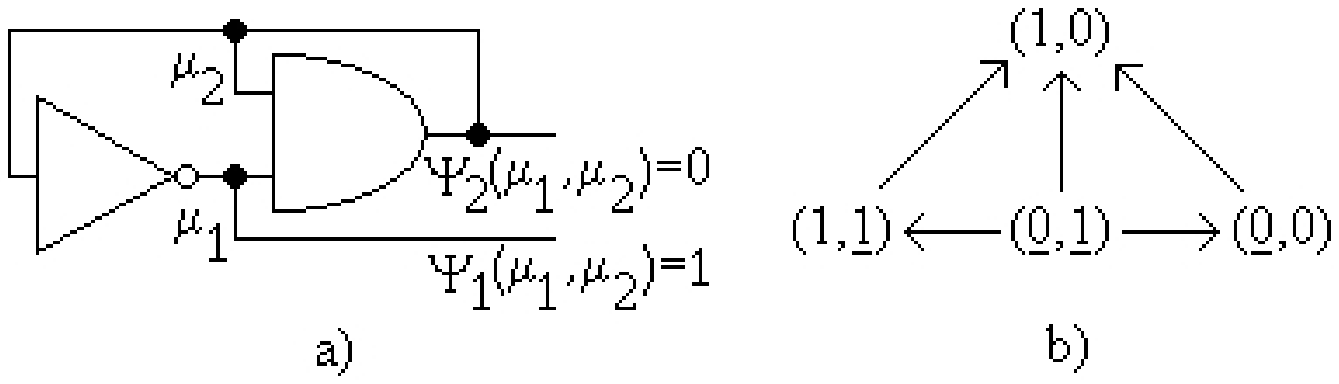}%
}\caption{The semi-regular system $\overline{\Xi}_{\Psi}^{\ast}$ from Example
\ref{Exa16}}%
\label{simetrie24}%
\end{center}
\end{figure}
we have drawn at a) the circuit and at b) the state portrait of $\Psi
:\mathbf{B}^{2}\rightarrow\mathbf{B}^{2},\forall(\mu_{1},\mu_{2})\in
\mathbf{B}^{2},$ $\Psi(\mu_{1},\mu_{2})=(1,0)$ for which%
\[
\overline{\Xi}_{\Psi}=\{(\mu_{1},\mu_{2})\chi_{(-\infty,t_{0})}\oplus(\mu
_{1}\cup\lambda_{1},\mu_{2}\lambda_{2})\chi_{\lbrack t_{0},t_{1})}\oplus
\]%
\[
\oplus(\mu_{1}\cup\lambda_{1}\cup\nu_{1},\mu_{2}\lambda_{2}\nu_{2}%
)\chi_{\lbrack t_{1},\infty)}|\mu,\lambda,\nu\in\mathbf{B}^{2},t_{0},t_{1}%
\in\mathbf{R},t_{0}<t_{1}\}.
\]
The arrows in Figures \ref{simetrie23} c) and \ref{simetrie24} b) are the
same, but with a different sense and we note that $\overline{\Xi}_{\Psi}=$
$^{\ast}\overline{\Xi}_{\Phi},$ where $\Phi$ is the one from Example
\ref{Exa2}.
\end{example}

\section{Isomorphisms and anti-isomorphisms}

\begin{definition}
\label{Def21}Let be $g:\mathbf{B}^{n}\rightarrow\mathbf{B}^{n}.$ It defines
the functions $\widehat{g}:\overline{\Pi}_{n}\rightarrow\overline{\Pi}%
_{n},\forall\alpha\in\overline{\Pi}_{n},\forall k\in\mathbf{N},$%
\[
\widehat{g}(\alpha)(k)=g(\alpha^{k});
\]
$\widetilde{g}:\overline{P}_{n}\rightarrow\overline{P}_{n},\forall\rho
\in\overline{P}_{n},\forall t\in\mathbf{R},$%
\[
\widetilde{g}(\rho)(t)=\left\{
\begin{array}
[c]{c}%
(0,...,0),if\;\rho(t)=(0,...,0),\\
g(\rho(t)),otherwise
\end{array}
\right.
\]
and $g:(\mathbf{B}^{n})^{\mathbf{R}}\rightarrow(\mathbf{B}^{n})^{\mathbf{R}%
},\forall x\in(\mathbf{B}^{n})^{\mathbf{R}},\forall t\in\mathbf{R},$%
\[
g(x)(t)=g(x(t)).
\]

\end{definition}

\begin{theorem}
\label{The29}Let be the functions $\Phi,\Psi,g,g^{\prime}:\mathbf{B}%
^{n}\rightarrow\mathbf{B}^{n}.$ The following statements are equivalent:

a) $\forall\nu\in\mathbf{B}^{n},$ the diagram%
\[%
\begin{array}
[c]{ccc}%
\mathbf{B}^{n} & \overset{\Phi^{\nu}}{\rightarrow} & \mathbf{B}^{n}\\
g\downarrow\; &  & \;\downarrow g\\
\mathbf{B}^{n} & \overset{\Psi^{g^{\prime}(\nu)}}{\rightarrow} &
\mathbf{B}^{n}%
\end{array}
\]
is commutative;

b) $\forall\mu\in\mathbf{B}^{n},\forall\alpha\in\overline{\Pi}_{n},\forall
k\in\mathbf{N}_{\_},$%
\[
g(\widehat{\Phi}^{\alpha}(\mu,k))=\widehat{\Psi}^{\widehat{g^{\prime}}%
(\alpha)}(g(\mu),k);
\]

c) $\forall\mu\in\mathbf{B}^{n},$%
\[
g(\mu)=\Psi^{g^{\prime}(0,...,0)}(g(\mu))
\]
and $\forall\mu\in\mathbf{B}^{n},\forall\rho\in\overline{P}_{n},\forall
t\in\mathbf{R},$%
\[
g(\Phi^{\rho}(\mu,t))=\Psi^{\widetilde{g^{\prime}}(\rho)}(g(\mu),t).
\]

\end{theorem}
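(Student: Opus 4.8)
The plan is to prove the cycle of implications a) $\Rightarrow$ b) $\Rightarrow$ c) $\Rightarrow$ a). The heart of the matter is the equivalence between the one-step intertwining relation $g\circ\Phi^\nu=\Psi^{g'(\nu)}\circ g$ and its iterated consequences, so I would isolate that as the engine and then deal separately with the bookkeeping of passing from discrete to continuous time.

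For a) $\Rightarrow$ b), I would fix $\mu$ and $\alpha$ and induct on $k\in\mathbf{N}_{\_}$. The base case $k=-1$ is immediate since both sides reduce to $g(\mu)$ by the initial conditions in (\ref{ite2}). For the inductive step, I would write
\[
g(\widehat\Phi^\alpha(\mu,k+1))=g(\Phi^{\alpha^{k+1}}(\widehat\Phi^\alpha(\mu,k)))
=\Psi^{g'(\alpha^{k+1})}(g(\widehat\Phi^\alpha(\mu,k))),
\]
using the definition (\ref{ite2}) then the commutative diagram a) applied with $\nu=\alpha^{k+1}$; the induction hypothesis then replaces $g(\widehat\Phi^\alpha(\mu,k))$ by $\widehat\Psi^{\widehat{g'}(\alpha)}(g(\mu),k)$, and since $\widehat{g'}(\alpha)(k+1)=g'(\alpha^{k+1})$ by Definition \ref{Def21}, the right-hand side is exactly $\widehat\Psi^{\widehat{g'}(\alpha)}(g(\mu),k+1)$ by (\ref{ite2}) again.

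For b) $\Rightarrow$ c), the first equation of c) follows by taking $\alpha$ the constant sequence $(0,\dots,0)$ and $k=0$ in b): then $\widehat\Phi^\alpha(\mu,0)=\Phi^{(0,\dots,0)}(\mu)=\mu$ since $\Phi^{(0,\dots,0)}$ is the identity by Definition \ref{Def39}, so $g(\mu)=\widehat\Psi^{\widehat{g'}(\alpha)}(g(\mu),0)=\Psi^{g'(0,\dots,0)}(g(\mu))$. For the continuous-time identity, I would fix $\rho\in\overline P_n$, take the associated $\alpha$ and $(t_k)$ from (\ref{sym16}), and evaluate $\Phi^\rho(\mu,t)$ on each interval of the partition using (\ref{ite3}): on $[t_k,t_{k+1})$ it equals $\widehat\Phi^\alpha(\mu,k)$, and on $(-\infty,t_0)$ it equals $\mu$. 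Applying $g$ and using b) gives $\widehat\Psi^{\widehat{g'}(\alpha)}(g(\mu),k)$ on $[t_k,t_{k+1})$; the subtle point is matching this against $\Psi^{\widetilde{g'}(\rho)}(g(\mu),t)$. Here I must check that $\widetilde{g'}(\rho)$ is the element of $\overline P_n$ whose associated sequence is $\widehat{g'}(\alpha)$ (possibly after discarding indices where $\alpha^k=(0,\dots,0)$, which is exactly what the special-case clause in the definition of $\widetilde g$ handles — a zero increment contributes nothing to the $\oplus$-sum in (\ref{sym16})), and on $(-\infty,t_0)$ use the first equation of c) to rewrite $g(\mu)$ as $\Psi^{g'(0,\dots,0)}(g(\mu))$, matching (\ref{ite9})-style initialization.

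For c) $\Rightarrow$ a), I would run the continuous-time identity against a carefully chosen $\rho$ — for a given $\nu$, take $\rho$ with a single jump $\alpha^0=\nu$ at some $t_0$ — so that at a time $t\ge t_0$ the identity $g(\Phi^\rho(\mu,t))=\Psi^{\widetilde{g'}(\rho)}(g(\mu),t)$ reads $g(\Phi^\nu(\mu))=\Psi^{g'(\nu)}(g(\mu))$ when $\nu\neq(0,\dots,0)$, using $\widetilde{g'}(\rho)(t)=g'(\nu)$; the case $\nu=(0,\dots,0)$ is covered by the first equation of c) together with $\Phi^{(0,\dots,0)}=\mathrm{id}$. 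I expect the main obstacle to be precisely this translation layer in the $\rho$/$\widetilde{g'}$ direction: one has to be careful that $\widetilde g$ sends $(0,\dots,0)$-values to $(0,\dots,0)$ rather than to $g(0,\dots,0)$, so the bookkeeping of which indices survive in the representation (\ref{sym16}) of $\widetilde{g'}(\rho)$ must be reconciled with the indices surviving in $\rho$ itself; everything else is a routine induction or a direct substitution.
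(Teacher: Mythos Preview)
Your proof is correct and follows essentially the same route as the paper's: the cycle a) $\Rightarrow$ b) $\Rightarrow$ c) $\Rightarrow$ a), with the same induction for a) $\Rightarrow$ b), the same specialization $\alpha^0=(0,\dots,0),\ k=0$ for the first part of c), and the same ``evaluate the continuous-time identity on $[t_0,t_1)$ for a well-chosen $\rho$'' for c) $\Rightarrow$ a). The only cosmetic differences are that the paper picks $\rho$ with $\rho(t_k)\neq(0,\dots,0)$ for all $k\geq 1$ rather than a single spike (both choices work), and that on $(-\infty,t_0)$ no appeal to the first equation of c) is needed since both sides already equal $g(\mu)$ by the initialization in (\ref{ite3}), not (\ref{ite9}).
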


\begin{proof}
a)$\Longrightarrow$b) We fix arbitrarily $\mu\in\mathbf{B}^{n},$ $\alpha
\in\overline{\Pi}_{n}$ and we use the induction on $k\geq-1$. For $k=-1$, b)
becomes $g(\mu)=g(\mu)$, thus we suppose that it is true for $k$ and we prove
it for $k+1$:%
\[
g(\widehat{\Phi}^{\alpha}(\mu,k+1))=g(\Phi^{\alpha^{k+1}}(\widehat{\Phi
}^{\alpha}(\mu,k)))=\Psi^{g^{\prime}(\alpha^{k+1})}(g(\widehat{\Phi}^{\alpha
}(\mu,k)))=
\]%
\[
=\Psi^{g^{\prime}(\alpha^{k+1})}(\widehat{\Psi}^{\widehat{g^{\prime}}(\alpha
)}(g(\mu),k))=\widehat{\Psi}^{\widehat{g^{\prime}}(\alpha)}(g(\mu),k+1).
\]

b)$\Longrightarrow$c) The first statement results from b) if we take
$\alpha^{0}=(0,...,0)$ and $k=0.$ In order to prove the second statement, let
$\mu\in\mathbf{B}^{n}$ and $\rho\in\overline{P}_{n}$ be arbitrary, thus
(\ref{sym16}) holds with $(t_{k})\in Seq,\rho(t_{0}),...,\rho(t_{k}%
),...\in\overline{\Pi}_{n}.$ If $\forall t\in\mathbf{R},\rho(t)=(0,...,0)$ the
statement to prove takes the form $g(\mu)=g(\mu)$ so that we can suppose now
that a finite or an infinite number of $\rho(t_{k})$ are $\neq(0,...,0).$ In
the case $\forall k\in\mathbf{N},\rho(t_{k})\neq(0,...,0)$ that does not
restrict the generality of the proof, we have that%
\begin{equation}
\widetilde{g^{\prime}}(\rho)(t)=g^{\prime}(\rho(t_{0}))\chi_{\{t_{0}%
\}}(t)\oplus...\oplus g^{\prime}(\rho(t_{k}))\chi_{\{t_{k}\}}(t)\oplus...
\end{equation}
is an element of $\overline{P}_{n}$ and%
\[
g(\Phi^{\rho}(\mu,t))=g(\mu\chi_{(-\infty,t_{0})}(t)\oplus\widehat{\Phi
}^{\alpha}(\mu,0)\chi_{\lbrack t_{0},t_{1})}(t)\oplus...
\]%
\[
...\oplus\widehat{\Phi}^{\alpha}(\mu,k)\chi_{\lbrack t_{k},t_{k+1})}%
(t)\oplus...)=
\]%
\[
=g(\mu)\chi_{(-\infty,t_{0})}(t)\oplus g(\widehat{\Phi}^{\alpha}(\mu
,0))\chi_{\lbrack t_{0},t_{1})}(t)\oplus...
\]%
\[
...\oplus g(\widehat{\Phi}^{\alpha}(\mu,k))\chi_{\lbrack t_{k},t_{k+1}%
)}(t)\oplus...=
\]%
\[
=g(\mu)\chi_{(-\infty,t_{0})}(t)\oplus\widehat{\Psi}^{\widehat{g^{\prime}%
}(\alpha)}(g(\mu),0)\chi_{\lbrack t_{0},t_{1})}(t)\oplus...
\]%
\[
...\oplus\widehat{\Psi}^{\widehat{g^{\prime}}(\alpha)}(g(\mu),k)\chi_{\lbrack
t_{k},t_{k+1})}(t)\oplus...=\Psi^{\widetilde{g^{\prime}}(\rho)}(g(\mu),t).
\]

c)$\Longrightarrow$a) Let $\nu,\mu\in\mathbf{B}^{n}$ be arbitrary and fixed
and we consider $\rho\in\overline{P}_{n}$ given by (\ref{sym16}), with
$(t_{k})\in Seq$ fixed, $\rho(t_{0})=\nu$ and $\forall k\geq1,\rho(t_{k}%
)\neq(0,...,0)$. We have%
\begin{equation}
g(\Phi^{\rho}(\mu,t))= \label{ite5}%
\end{equation}%
\[
=g(\mu\chi_{(-\infty,t_{0})}(t)\oplus\Phi^{\nu}(\mu)\chi_{\lbrack t_{0}%
,t_{1})}(t)\oplus\widehat{\Phi}^{\alpha}(\mu,1)\chi_{\lbrack t_{1},t_{2}%
)}(t)\oplus...)=
\]%
\[
=g(\mu)\chi_{(-\infty,t_{0})}(t)\oplus g(\Phi^{\nu}(\mu))\chi_{\lbrack
t_{0},t_{1})}(t)\oplus g(\widehat{\Phi}^{\alpha}(\mu,1))\chi_{\lbrack
t_{1},t_{2})}(t)\oplus...
\]

Case i) $\nu=(0,...,0),$

the commutativity of the diagram is equivalent with the first statement of c).

Case ii) $\nu\neq(0,...,0),$%
\[
\widetilde{g^{\prime}}(\rho)(t)=g^{\prime}(\rho(t))=g^{\prime}(\nu
)\chi_{\{t_{0}\}}(t)\oplus g^{\prime}(\rho(t_{1}))\chi_{\{t_{1}\}}%
(t)\oplus...,
\]%
\[
\Psi^{\widetilde{g^{\prime}}(\rho)}(g(\mu),t)=
\]%
\[
=g(\mu)\chi_{(-\infty,t_{0})}(t)\oplus\Psi^{g^{\prime}(\nu)}(g(\mu
))\chi_{\lbrack t_{0},t_{1})}(t)\oplus\widehat{\Psi}^{\widehat{g^{\prime}%
}(\alpha)}(g(\mu),1)\chi_{\lbrack t_{1},t_{2})}(t)\oplus...
\]
and from (\ref{ite5}), for $t\in\lbrack t_{0},t_{1}),$ we obtain%
\[
g(\Phi^{\nu}(\mu))=\Psi^{g^{\prime}(\nu)}(g(\mu)).
\]

\end{proof}

\begin{definition}
\label{Def64}We consider the functions $\Phi,\Psi:\mathbf{B}^{n}%
\rightarrow\mathbf{B}^{n}.$ If $g,g^{\prime}:\mathbf{B}^{n}\rightarrow
\mathbf{B}^{n}$ bijective exist such that one of the equivalent properties a),
b), c) from Theorem \ref{The29} is satisfied, then we say that the couple
$(g,g^{\prime})$ defines an \textbf{isomorphism} from $\widehat{\overline{\Xi
}}_{\Phi}$ to $\widehat{\overline{\Xi}}_{\Psi},$ or from $\overline{\Xi}%
_{\Phi}$ to $\overline{\Xi}_{\Psi},$ or from $\Phi$ to $\Psi.$ We use the
notation $\overline{Iso}(\Phi,\Psi)$ for the set of these couples and we also
denote with $\overline{Aut}(\Phi)=\overline{Iso}(\Phi,\Phi)$ the set of the
\textbf{automorphisms} of $\widehat{\overline{\Xi}}_{\Phi},$ $\overline{\Xi
}_{\Phi},$ or $\Phi.$
\end{definition}

\begin{theorem}
\label{The28}For $\Phi,\Psi,g,g^{\prime}:\mathbf{B}^{n}\rightarrow
\mathbf{B}^{n}$, the following statements are equivalent:

a) $\forall\nu\in\mathbf{B}^{n},$ the diagram%
\[%
\begin{array}
[c]{ccc}%
\mathbf{B}^{n} & \overset{\Phi^{\nu}}{\rightarrow} & \mathbf{B}^{n}\\
g\downarrow\; &  & \;\downarrow g\\
\mathbf{B}^{n} & \overset{\Psi^{g^{\prime}(\nu)}}{\longleftarrow} &
\mathbf{B}^{n}%
\end{array}
\]
is commutative;

b) $\forall\mu\in\mathbf{B}^{n},\forall\alpha\in\overline{\Pi}_{n},\forall
k\in\mathbf{N}_{\_},$%
\[
g(\mu)=\text{ }^{\ast}\widehat{\Psi}^{\widehat{g^{\prime}}(\alpha)}%
(g(\widehat{\Phi}^{\alpha}(\mu,k)),k);
\]

c) $\forall\mu\in\mathbf{B}^{n},$%
\[
g(\mu)=\Psi^{g^{\prime}(0,...,0)}(g(\mu))
\]
and $\forall\mu\in\mathbf{B}^{n},\forall\rho\in\overline{P}_{n},\forall
t\in\mathbf{R},$%
\[
g(\mu)=\text{ }^{\ast}\Psi^{\widetilde{g^{\prime}}(\rho)}(g(\Phi^{\rho}%
(\mu,t)),t).
\]

\end{theorem}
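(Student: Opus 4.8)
The plan is to prove the chain of implications a)$\Longrightarrow$b)$\Longrightarrow$c)$\Longrightarrow$a), running in parallel with the proof of Theorem \ref{The29} but systematically replacing the forward iteration $\widehat{\Psi}^{\widehat{g'}(\alpha)}$ by the backward (anti-)iteration $^{\ast}\widehat{\Psi}^{\widehat{g'}(\alpha)}$ of Definition \ref{Def15}. I read the reversed lower arrow in a) as turning the conjugation identity $g\circ\Phi^{\nu}=\Psi^{g'(\nu)}\circ g$ of Theorem \ref{The29} into $g=\Psi^{g'(\nu)}\circ g\circ\Phi^{\nu}$, i.e. $g(\mu)=\Psi^{g'(\nu)}(g(\Phi^{\nu}(\mu)))$ for all $\mu,\nu\in\mathbf{B}^{n}$; this is precisely the relation that lets a one-step application of $\Psi$ ``undo'' a one-step application of $\Phi$, which is what an anti-semi-orbit records.

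For a)$\Longrightarrow$b) I would fix $\mu$ and $\alpha$ and induct on $k\geq-1$. The base case $k=-1$ is the trivial identity $g(\mu)=g(\mu)$, since $\widehat{\Phi}^{\alpha}(\mu,-1)=\mu$ by (\ref{ite2}) and $^{\ast}\widehat{\Psi}^{\widehat{g'}(\alpha)}(\cdot,-1)$ is the identity by (\ref{ite6}). For the step, combine $\widehat{\Phi}^{\alpha}(\mu,k+1)=\Phi^{\alpha^{k+1}}(\widehat{\Phi}^{\alpha}(\mu,k))$ with the commutativity of the reversed diagram at $\nu=\alpha^{k+1}$ to write $g(\widehat{\Phi}^{\alpha}(\mu,k))=\Psi^{\widehat{g'}(\alpha)^{k+1}}(g(\widehat{\Phi}^{\alpha}(\mu,k+1)))$, then substitute this into the implicit defining relation (\ref{ite6}) of $^{\ast}\widehat{\Psi}^{\widehat{g'}(\alpha)}$ and invoke the induction hypothesis so that $g(\mu)$ is exhibited as the value of $^{\ast}\widehat{\Psi}^{\widehat{g'}(\alpha)}(g(\widehat{\Phi}^{\alpha}(\mu,k+1)),k+1)$.

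For b)$\Longrightarrow$c): the first equality of c) is b) specialized to $k=0$ and $\alpha^{0}=(0,...,0)$, using $\Phi^{(0,...,0)}=\mathrm{id}$ (so $\widehat{\Phi}^{\alpha}(\mu,0)=\mu$) and the relation (\ref{ite6}) that defines $^{\ast}\widehat{\Psi}^{\widehat{g'}(\alpha)}(\cdot,0)$. The second, continuous-time equality follows by expanding $\Phi^{\rho}(\mu,\cdot)$ via (\ref{ite3}), distributing $g$ over the characteristic-function decomposition exactly as in the b)$\Longrightarrow$c) step of Theorem \ref{The29} (including the reduction to the case where all $\rho(t_{k})\neq(0,...,0)$, so that $\widetilde{g'}(\rho)\in\overline{P}_{n}$ with underlying sequence $\widehat{g'}(\alpha)$ and jump instants $(t_{k})$), then applying b) on each interval $[t_{k},t_{k+1})$ and on $(-\infty,t_{0})$, and recognizing the reassembled sum through (\ref{ite9}) as $^{\ast}\Psi^{\widetilde{g'}(\rho)}(g(\Phi^{\rho}(\mu,\cdot)),\cdot)$. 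For c)$\Longrightarrow$a): fix $\nu,\mu$, take $\rho\in\overline{P}_{n}$ given by (\ref{sym16}) with $(t_{k})\in Seq$ fixed, $\rho(t_{0})=\nu$ and $\rho(t_{k})\neq(0,...,0)$ for $k\geq1$, and compare the two sides of the second equality of c) on $[t_{0},t_{1})$, where $\Phi^{\rho}(\mu,t)=\Phi^{\nu}(\mu)$ and $^{\ast}\Psi^{\widetilde{g'}(\rho)}(\cdot,t)=\;^{\ast}\widehat{\Psi}^{\widehat{g'}(\alpha)}(\cdot,0)$; the case $\nu=(0,...,0)$ reduces to the first equality of c), and the case $\nu\neq(0,...,0)$ yields the commutativity of the reversed diagram for $\nu$ after one use of (\ref{ite6}).

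The step I expect to be the main obstacle is the inductive step of a)$\Longrightarrow$b). In Theorem \ref{The29} the forward orbit is built by honest composition, so one simply iterates an equality of maps; here the anti-semi-orbit is specified only implicitly, with (\ref{ite6}) determining the value at step $k$ from the value at step $k+1$ rather than the reverse. The care required is to align this ``backward'' recursion of $^{\ast}\widehat{\Psi}^{\widehat{g'}(\alpha)}$ with the ``forward'' recursion of $\widehat{\Phi}^{\alpha}$, invoking commutativity of the reversed diagram at exactly the right index at each stage, and to check that the candidate value actually satisfies the implicit relation (\ref{ite6}) — not merely that it is consistent with it — so that it is legitimately the value of the anti-semi-orbit.
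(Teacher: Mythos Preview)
Your proposal is correct and follows essentially the same approach as the paper: the paper proves a)$\Longrightarrow$b) by the same induction on $k\geq -1$ (trivial base case, inductive step via the commutativity of the reversed diagram at $\nu=\alpha^{k+1}$ combined with the recursion (\ref{ite6})), and then simply says ``The proof is similar with the proof of Theorem \ref{The29}'' for the remaining implications, which is exactly the plan you spell out for b)$\Longrightarrow$c) and c)$\Longrightarrow$a). Your explicit flagging of the implicit nature of (\ref{ite6}) is more careful than the paper, which silently uses the passage $^{\ast}\widehat{\Psi}^{\widehat{g'}(\alpha)}(\Psi^{g'(\alpha^{k+1})}(X),k)={}^{\ast}\widehat{\Psi}^{\widehat{g'}(\alpha)}(X,k+1)$ without comment.
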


\begin{proof}
a)$\Longrightarrow$b) We fix arbitrarily $\mu\in\mathbf{B}^{n},$ $\alpha
\in\overline{\Pi}_{n}$ and we use the induction on $k\geq-1.$ In the case
$k=-1$ the equality to be proved is satisfied%
\[
g(\mu)=g(\widehat{\Phi}^{\alpha}(\mu,-1))=\widehat{\Psi}^{\widehat{g^{\prime}%
}(\alpha)}(g(\widehat{\Phi}^{\alpha}(\mu,-1)),-1),
\]
thus we presume that the statement is true for $k$ and we prove it for $k+1.$
We have:%
\[
g(\mu)=\text{ }^{\ast}\widehat{\Psi}^{\widehat{g^{\prime}}(\alpha)}%
(g(\widehat{\Phi}^{\alpha}(\mu,k)),k)=\text{ }^{\ast}\widehat{\Psi}%
^{\widehat{g^{\prime}}(\alpha)}(\Psi^{g^{\prime}(\alpha^{k+1})}(g(\Phi
^{\alpha^{k+1}}(\widehat{\Phi}^{\alpha}(\mu,k)))),k)
\]%
\[
=\text{ }^{\ast}\widehat{\Psi}^{\widehat{g^{\prime}}(\alpha)}(g(\widehat{\Phi
}^{\alpha}(\mu,k+1)),k+1).
\]

The proof is similar with the proof of Theorem \ref{The29}.
\end{proof}

\begin{definition}
\label{Def65}Let be the functions $\Phi,\Psi:\mathbf{B}^{n}\rightarrow
\mathbf{B}^{n}.$ If $g,g^{\prime}:\mathbf{B}^{n}\rightarrow\mathbf{B}^{n}$
bijective exist such that one of the equivalent properties a), b), c) from
Theorem \ref{The28} is fulfilled, we say that the couple $(g,g^{\prime})$
defines an \textbf{anti-isomorphism} from $\widehat{\overline{\Xi}}_{\Phi}$ to
$^{\ast}\widehat{\overline{\Xi}}_{\Psi},$ or from $\overline{\Xi}_{\Phi}$ to
$^{\ast}\overline{\Xi}_{\Psi},$ or from $\Phi$ to $\Psi.$ We use the notation
$^{\ast}\overline{Iso}(\Phi,\Psi)$ for these couples and we also denote with
$^{\ast}\overline{Aut}(\Phi)=$ $^{\ast}\overline{Iso}(\Phi,\Phi)$ the set of
the \textbf{anti-automorphisms} of $\widehat{\overline{\Xi}}_{\Phi},$
$\overline{\Xi}_{\Phi}$ or $\Phi$.
\end{definition}

\section{Symmetry and anti-symmetry}

\begin{remark}
The fact that $(1_{\mathbf{B}^{n}},1_{\mathbf{B}^{n}})\in\overline{Aut}(\Phi)$
implies $\overline{Aut}(\Phi)\neq\emptyset,$ but all of $\overline{Iso}%
(\Phi,\Psi),^{\ast}\overline{Iso}(\Phi,\Psi)$ and $^{\ast}\overline{Aut}%
(\Phi)$ may be empty.
\end{remark}

\begin{definition}
\label{Def66}Let be $\Phi,\Psi:\mathbf{B}^{n}\rightarrow\mathbf{B}^{n},$
$\Phi\neq\Psi.$ If $\overline{Iso}(\Phi,\Psi)\neq\emptyset$, then
$\widehat{\overline{\Xi}}_{\Phi},$ $\widehat{\overline{\Xi}}_{\Psi};$
$\overline{\Xi}_{\Phi},\overline{\Xi}_{\Psi};$ $\Phi,\Psi$ are called
\textbf{symmetrical}, or \textbf{conjugated}; if $^{\ast}\overline{Iso}%
(\Phi,\Psi)\neq\emptyset$, then $\widehat{\overline{\Xi}}_{\Phi},$ $^{\ast
}\widehat{\overline{\Xi}}_{\Psi};$ $\overline{\Xi}_{\Phi},^{\ast}\overline
{\Xi}_{\Psi};$ $\Phi,\Psi$ are called \textbf{anti-symmetrical}, or
\textbf{anti-conjugated}.

If $card(\overline{Aut}(\Phi))>1,$ then $\widehat{\overline{\Xi}}_{\Phi},$
$\overline{\Xi}_{\Phi}$ and $\Phi$ are called \textbf{symmetrical} and if
$^{\ast}\overline{Aut}(\Phi)\neq\emptyset,$ then $\widehat{\overline{\Xi}%
}_{\Phi},$ $\overline{\Xi}_{\Phi}$ and $\Phi$ are called
\textbf{anti-symmetrical}.
\end{definition}

\begin{remark}
\label{Rem33}The symmetry of $\Phi,\Psi$ means that $(g,g^{\prime}%
)\in\overline{Iso}(\Phi,\Psi)$ maps the transfers $\mu\rightarrow\Phi^{\nu
}(\mu)$ in transfers $g(\mu)\rightarrow g(\Phi^{\nu}(\mu))=\Psi^{g^{\prime
}(\nu)}(g(\mu));$ the situation when $\Phi$ is symmetrical and $(g,g^{\prime
})\in\overline{Aut}(\Phi)$ is similar. Anti-symmetry may be understood as
mirroring: $(g,g^{\prime})\in$ $^{\ast}\overline{Iso}(\Phi,\Psi)$ maps the
transfers (or arrows) $\mu\rightarrow\Phi^{\nu}(\mu)$ in transfers
$g(\mu)\longleftarrow g(\Phi^{\nu}(\mu))=\Psi^{g^{\prime}(\nu)}(g(\mu))$ and
similarly for $(g,g^{\prime})\in$ $^{\ast}\overline{Aut}(\Phi).$
\end{remark}

\begin{theorem}
\label{The32}Let be $\Phi,\Psi:\mathbf{B}^{n}\rightarrow\mathbf{B}^{n}.$

a) If $(g,g^{\prime})\in\overline{Iso}(\Phi,\Psi)$, then $(g^{-1},g^{\prime
-1})\in\overline{Iso}(\Psi,\Phi).$

b) If $(g,g^{\prime})\in$ $^{\ast}\overline{Iso}(\Phi,\Psi)$, then
$(g^{-1},g^{\prime-1})\in$ $^{\ast}\overline{Iso}(\Psi,\Phi).$
\end{theorem}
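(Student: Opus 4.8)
The plan is to exploit the characterization of isomorphisms and anti-isomorphisms via the commutative-diagram condition a) in Theorems \ref{The29} and \ref{The28}, since that formulation is purely algebraic and does not involve orbits or sequences. For part a), suppose $(g,g')\in\overline{Iso}(\Phi,\Psi)$. By Definition \ref{Def64}, $g$ and $g'$ are bijective and, by Theorem \ref{The29} a), for every $\nu\in\mathbf{B}^n$ the square with top arrow $\Phi^{\nu}$, bottom arrow $\Psi^{g'(\nu)}$, and both vertical arrows $g$ commutes, i.e. $g\circ\Phi^{\nu}=\Psi^{g'(\nu)}\circ g$. Since $g$ is bijective, $g^{-1}$ exists, and I would rewrite this as $\Phi^{\nu}\circ g^{-1}=g^{-1}\circ\Psi^{g'(\nu)}$. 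Now I want to express this in the form required for $(g^{-1},g'^{-1})\in\overline{Iso}(\Psi,\Phi)$: for every $\lambda\in\mathbf{B}^n$, $g^{-1}\circ\Psi^{\lambda}=\Phi^{g'^{-1}(\lambda)}\circ g^{-1}$. Because $g'$ is a bijection of $\mathbf{B}^n$, I can substitute $\nu=g'^{-1}(\lambda)$, so that $g'(\nu)=\lambda$, and the identity $\Phi^{\nu}\circ g^{-1}=g^{-1}\circ\Psi^{g'(\nu)}$ becomes exactly $\Phi^{g'^{-1}(\lambda)}\circ g^{-1}=g^{-1}\circ\Psi^{\lambda}$, which is condition a) of Theorem \ref{The29} for the pair $(g^{-1},g'^{-1})$ and the functions $\Psi,\Phi$. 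Together with the fact that $g^{-1}$ and $g'^{-1}$ are bijective, Definition \ref{Def64} then gives $(g^{-1},g'^{-1})\in\overline{Iso}(\Psi,\Phi)$.

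For part b), the argument is the same idea applied to the anti-isomorphism diagram. If $(g,g')\in{}^{\ast}\overline{Iso}(\Phi,\Psi)$, then by Definition \ref{Def65} and Theorem \ref{The28} a), for every $\nu\in\mathbf{B}^n$ the square commutes with top arrow $\Phi^{\nu}$, the left vertical $g$, the right vertical $g$, and the bottom arrow pointing leftward as $\Psi^{g'(\nu)}$; that is, $g=\Psi^{g'(\nu)}\circ g\circ\Phi^{\nu}$, equivalently $g\circ\Phi^{\nu}\circ(\text{anything})$... more precisely the commutativity of that diagram reads $g\circ\Phi^{\nu}$ equals the composite going down, right-to-left along the bottom, i.e. we must track the directions carefully. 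The key algebraic content is the single identity coming from the diagram; I would write it out once as an equation relating $g$, $\Phi^{\nu}$, $\Psi^{g'(\nu)}$, invert $g$, and then substitute $\nu=g'^{-1}(\lambda)$ exactly as in part a) to land on the corresponding diagram for $(g^{-1},g'^{-1})$ with $\Psi,\Phi$ in place of $\Phi,\Psi$, which by Theorem \ref{The28} and Definition \ref{Def65} says $(g^{-1},g'^{-1})\in{}^{\ast}\overline{Iso}(\Psi,\Phi)$.

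The only genuinely delicate point is bookkeeping the arrow directions in the anti-isomorphism diagram of Theorem \ref{The28}: because the bottom arrow is reversed, "commutativity" is the statement $g\circ\Phi^{\nu}=\Psi^{g'(\nu)}{}^{-1}\circ g$ only if $\Psi^{g'(\nu)}$ happens to be invertible, which it need not be; so I expect the correct reading is rather $\Psi^{g'(\nu)}\circ g\circ\Phi^{\nu}=g$, i.e. the composite around one way equals the other map, and one must verify that inverting $g$ and reindexing by $g'^{-1}$ produces precisely the mirror-image relation $\Phi^{g'^{-1}(\lambda)}\circ g^{-1}\circ\Psi^{\lambda}=g^{-1}$ defining membership in ${}^{\ast}\overline{Iso}(\Psi,\Phi)$. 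Once the diagram is interpreted correctly this is again just left/right composition with $g^{-1}$ together with the substitution $\nu=g'^{-1}(\lambda)$; there is no induction or case analysis needed, since everything is reduced to condition a) of the two theorems. I would also remark at the end that this immediately yields the symmetry of the relations "symmetrical" and "anti-symmetrical" from Definition \ref{Def66}, though that observation is not strictly part of the statement.
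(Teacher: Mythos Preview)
Your plan for part a) is correct and is exactly the paper's argument: both of you use characterization a) of Theorem \ref{The29}, rewrite the commuting square as an equation, and reindex via the bijection $g'$ (the paper writes it pointwise with $\mu'=g(\mu),\ \nu'=g'(\nu)$, you write it compositionally with the substitution $\nu=g'^{-1}(\lambda)$, but the content is identical).

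For part b) your strategy is also the paper's (the paper in fact only states the two diagrams and leaves the verification to the reader), but your claim that the passage is ``just left/right composition with $g^{-1}$ together with the substitution $\nu=g'^{-1}(\lambda)$'' hides one genuine step. From the hypothesis $\Psi^{g'(\nu)}\circ g\circ\Phi^{\nu}=g$ you get, after composing with $g^{-1}$ on the left, $(g^{-1}\circ\Psi^{g'(\nu)}\circ g)\circ\Phi^{\nu}=1_{\mathbf{B}^{n}}$; what you need for $(g^{-1},g'^{-1})\in{}^{\ast}\overline{Iso}(\Psi,\Phi)$ is the \emph{other} order, $\Phi^{\nu}\circ(g^{-1}\circ\Psi^{g'(\nu)}\circ g)=1_{\mathbf{B}^{n}}$. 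No amount of composing with $g^{-1}$ alone turns one into the other; you must use that $\mathbf{B}^{n}$ is finite, so a one-sided inverse of a self-map is automatically two-sided. Once you insert that one sentence, your argument for b) is complete and matches the paper's (unwritten) intent.
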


\begin{proof}
a) The hypothesis states that $\forall\nu\in\mathbf{B}^{n},$ the diagram%
\[%
\begin{array}
[c]{ccc}%
\mathbf{B}^{n} & \overset{\Phi^{\nu}}{\rightarrow} & \mathbf{B}^{n}\\
g\downarrow\; &  & \;\downarrow g\\
\mathbf{B}^{n} & \overset{\Psi^{g^{\prime}(\nu)}}{\rightarrow} &
\mathbf{B}^{n}%
\end{array}
\]
commutes, with $g,g^{\prime}$ bijective$.$ We fix arbitrarily $\nu
\in\mathbf{B}^{n},\mu\in\mathbf{B}^{n}.$ We denote $\mu^{\prime}=g(\mu
),\nu^{\prime}=g^{\prime}(\nu)$ and we note that%
\begin{equation}
g^{-1}(\Psi^{\nu^{\prime}}(\mu^{\prime}))=\Phi^{g^{\prime-1}(\nu^{\prime}%
)}(g^{-1}(\mu^{\prime})). \label{ite11}%
\end{equation}
As $\nu,\mu$ were chosen arbitrarily and on the other hand, when $\nu$ runs in
$\mathbf{B}^{n},\nu^{\prime}$ runs in $\mathbf{B}^{n}$ and when $\mu$ runs in
$\mathbf{B}^{n},\mu^{\prime}$ runs in $\mathbf{B}^{n}$, we infer that
(\ref{ite11}) is equivalent with the commutativity of the diagram%
\[%
\begin{array}
[c]{ccc}%
\;\;\;\;\;\mathbf{B}^{n} & \overset{\Psi^{\nu^{\prime}}}{\rightarrow} &
\mathbf{B}^{n}\;\;\;\\
g^{-1}\downarrow &  & \downarrow g^{-1}\\
\;\;\;\;\;\mathbf{B}^{n} & \overset{\Phi^{g^{\prime-1}(\nu^{\prime})}%
}{\rightarrow} & \mathbf{B}^{n}\;\;\;
\end{array}
\]
for any $\nu^{\prime}\in\mathbf{B}^{n}.$ We have proved that $(g^{-1}%
,g^{\prime-1})\in\overline{Iso}(\Psi,\Phi).$

b) By hypothesis $\forall\nu\in\mathbf{B}^{n},$ the diagram%
\[%
\begin{array}
[c]{ccc}%
\mathbf{B}^{n} & \overset{\Phi^{\nu}}{\rightarrow} & \mathbf{B}^{n}\\
g\downarrow\; &  & \;\downarrow g\\
\mathbf{B}^{n} & \overset{\Psi^{g^{\prime}(\nu)}}{\longleftarrow} &
\mathbf{B}^{n}%
\end{array}
\]
is commutative, $g,g^{\prime}$ bijective and we prove that $\forall\nu
^{\prime}\in\mathbf{B}^{n},$ the diagram%
\[%
\begin{array}
[c]{ccc}%
\;\;\;\;\;\mathbf{B}^{n} & \overset{\Psi^{\nu^{\prime}}}{\rightarrow} &
\mathbf{B}^{n}\;\;\;\\
g^{-1}\downarrow &  & \downarrow g^{-1}\\
\;\;\;\;\;\mathbf{B}^{n} & \overset{\Phi^{g^{\prime-1}(\nu^{\prime})}%
}{\longleftarrow} & \mathbf{B}^{n}\;\;\;
\end{array}
\]
is commutative.
\end{proof}

\begin{theorem}
\label{The36}$\overline{Aut}(\Phi)$ is a group relative to the law:
$\forall(g,g^{\prime})\in\overline{Aut}(\Phi),$ $\forall(h,h^{\prime}%
)\in\overline{Aut}(\Phi),$%
\[
(h,h^{\prime})\circ(g,g^{\prime})=(h\circ g,h^{\prime}\circ g^{\prime}).
\]

\end{theorem}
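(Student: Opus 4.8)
The plan is to verify the group axioms directly for the set $\overline{Aut}(\Phi)$ under the componentwise composition law. First I would check that the law is well-defined, i.e. that $(h\circ g, h'\circ g')\in\overline{Aut}(\Phi)$ whenever $(g,g')$ and $(h,h')$ are in $\overline{Aut}(\Phi)$. The cleanest route is via characterization a) of Theorem~\ref{The29}: given that the squares with top $\Phi^{\nu}$, verticals $g$, bottom $\Phi^{g'(\nu)}$ commute for all $\nu$, and similarly the squares with top $\Phi^{\mu}$, verticals $h$, bottom $\Phi^{h'(\mu)}$ commute for all $\mu$, I would stack the two squares: take the first square for a fixed $\nu$, and the second square for $\mu=g'(\nu)$. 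Vertical composition of these commuting squares yields a commuting square with top $\Phi^{\nu}$, verticals $h\circ g$, and bottom $\Phi^{h'(g'(\nu))}=\Phi^{(h'\circ g')(\nu)}$; since $h\circ g$ and $h'\circ g'$ are bijective (composites of bijections), property a) holds for the pair $(h\circ g, h'\circ g')$, so it lies in $\overline{Aut}(\Phi)$.

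Next I would establish associativity, which is immediate: $((h,h')\circ(g,g'))\circ(f,f') = (h\circ g, h'\circ g')\circ(f,f') = ((h\circ g)\circ f, (h'\circ g')\circ f')$, and associativity of ordinary function composition in each coordinate gives equality with $(h,h')\circ((g,g')\circ(f,f'))$. The identity element is $(1_{\mathbf{B}^{n}},1_{\mathbf{B}^{n}})$, which belongs to $\overline{Aut}(\Phi)$ because the diagram in Theorem~\ref{The29} a) with $g=g'=1_{\mathbf{B}^{n}}$ reduces to $\Phi^{\nu}=\Phi^{\nu}$; and $(1,1)\circ(g,g')=(g,g')=(g,g')\circ(1,1)$ by the definition of the law.

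Finally, for inverses: given $(g,g')\in\overline{Aut}(\Phi)=\overline{Iso}(\Phi,\Phi)$, Theorem~\ref{The32} a) (applied with $\Psi=\Phi$) yields $(g^{-1},g'^{-1})\in\overline{Iso}(\Phi,\Phi)=\overline{Aut}(\Phi)$, and clearly $(g^{-1},g'^{-1})\circ(g,g')=(g^{-1}\circ g, g'^{-1}\circ g')=(1_{\mathbf{B}^{n}},1_{\mathbf{B}^{n}})$ and likewise on the other side. This closes all four axioms.

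I do not anticipate a genuine obstacle here; the only point requiring a little care is the well-definedness (closure) step, where one must remember to instantiate the second commuting square at the argument $g'(\nu)$ rather than at $\nu$ so that the bottom arrow matches up as $\Phi^{(h'\circ g')(\nu)}$. Everything else is a formal consequence of the associativity of composition and of the two previously proved theorems (\ref{The29} and \ref{The32}). If one prefers, closure can alternatively be argued through characterization b) of Theorem~\ref{The29} by composing the two identities $g(\widehat{\Phi}^{\alpha}(\mu,k))=\widehat{\Phi}^{\widehat{g'}(\alpha)}(g(\mu),k)$ and $h(\widehat{\Phi}^{\beta}(\mu,k))=\widehat{\Phi}^{\widehat{h'}(\beta)}(h(\mu),k)$ with $\beta=\widehat{g'}(\alpha)$, but the diagram-stacking argument via a) is the most transparent.
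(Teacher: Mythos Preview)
Your proposal is correct and follows essentially the same approach as the paper: closure is established by stacking the two commuting squares from characterization a) of Theorem~\ref{The29} (the paper writes this out as the equation chain $(h\circ g)\circ\Phi^{\nu}=\ldots=\Phi^{(h'\circ g')(\nu)}\circ(h\circ g)$), the identity $(1_{\mathbf{B}^{n}},1_{\mathbf{B}^{n}})$ is noted, and inverses are handled by Theorem~\ref{The32} a). The only difference is that you explicitly verify associativity, which the paper leaves implicit.
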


\begin{proof}
The fact that $\forall(g,g^{\prime})\in\overline{Aut}(\Phi),\forall
(h,h^{\prime})\in\overline{Aut}(\Phi),(h\circ g,h^{\prime}\circ g^{\prime}%
)\in\overline{Aut}(\Phi)$ is proved like this: $\forall\nu\in\mathbf{B}^{n},$%
\[
(h\circ g)\circ\Phi^{\nu}=h\circ(g\circ\Phi^{\nu})=h\circ(\Phi^{g^{\prime}%
(\nu)}\circ g)=(h\circ\Phi^{g^{\prime}(\nu)})\circ g=
\]%
\[
=(\Phi^{h^{\prime}(g^{\prime}(\nu))}\circ h)\circ g=\Phi^{(h^{\prime}\circ
g^{\prime})(\nu)}\circ(h\circ g);
\]
the fact that $(1_{\mathbf{B}^{n}},1_{\mathbf{B}^{n}})\in\overline{Aut}(\Phi)$
was mentioned before; and the fact that $\forall(g,g^{\prime})\in
\overline{Aut}(\Phi),$ $(g^{-1},g^{\prime-1})\in\overline{Aut}(\Phi)$ was
shown at Theorem \ref{The32} a).
\end{proof}

\begin{definition}
Any subgroup $G\subset\overline{Aut}(\Phi)$ with $card(G)>1$\ is called a
\textbf{group of symmetry} of $\widehat{\overline{\Xi}}_{\Phi},$ of
$\overline{\Xi}_{\Phi}$ or of $\Phi.$
\end{definition}

\section{Examples}

\begin{example}
\label{Exa6_}$\Phi,\Psi:\mathbf{B}^{2}\rightarrow\mathbf{B}^{2}$ are given by,
see Figure \ref{echiv1}
\begin{figure}
[ptb]
\begin{center}
\fbox{\includegraphics[
height=1.4313in,
width=3.1981in
]%
{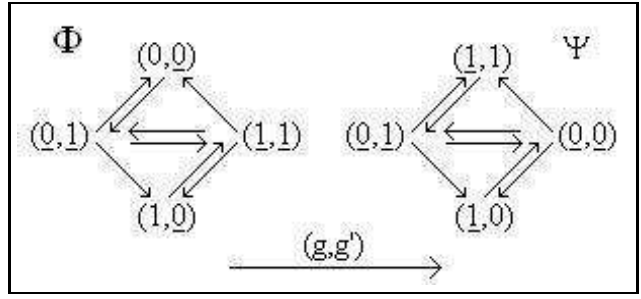}%
}\caption{Symmetrical systems, Example \ref{Exa6_}}%
\label{echiv1}%
\end{center}
\end{figure}
\[
\forall(\mu_{1},\mu_{2})\in\mathbf{B}^{2},\Phi(\mu_{1},\mu_{2})=(\mu_{1}%
\oplus\mu_{2},\overline{\mu_{2}}),
\]%
\[
\forall(\mu_{1},\mu_{2})\in\mathbf{B}^{2},\Psi(\mu_{1},\mu_{2})=(\overline
{\mu_{1}},\overline{\mu_{1}}\;\overline{\mu_{2}}\cup\mu_{1}\mu_{2})
\]
and the bijections $g,g^{\prime}:\mathbf{B}^{2}\rightarrow\mathbf{B}^{2}$ are
$\forall(\mu_{1},\mu_{2})\in\mathbf{B}^{2},$%
\[
g(\mu_{1},\mu_{2})=(\overline{\mu_{2}},\overline{\mu_{1}}),
\]%
\[
g^{\prime}(\mu_{1},\mu_{2})=(\mu_{2},\mu_{1})
\]
(in order to understand the choice of $g^{\prime},$ to be remarked in Figure
\ref{echiv1} the positions of the underlined coordinates for $\Phi$ and $\Psi
$). $\Phi$ and $\Psi$ are conjugated.
\end{example}

\begin{example}
\label{Exe2}The system from Figure \ref{simetrie15} is symmetrical and a group
of symmetry is generated by the couples $(g,1_{\mathbf{B}^{3}}%
),(u,1_{\mathbf{B}^{3}}),(v,1_{\mathbf{B}^{3}}),$ see Table 2; $g,u,v$ are
transpositions that permute the isolated fixed points
$(1,0,0),(1,0,1),(1,1,1).$
\begin{figure}
[ptb]
\begin{center}
\fbox{\includegraphics[
height=0.8441in,
width=3.6218in
]%
{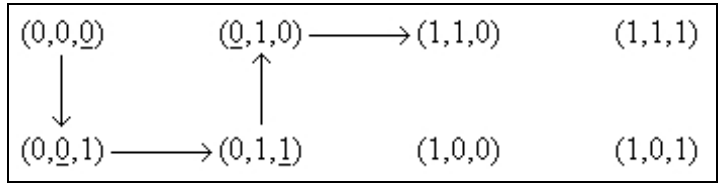}%
}\caption{Symmetrical system, Example \ref{Exe2}}%
\label{simetrie15}%
\end{center}
\end{figure}
\begin{align*}
&
\begin{array}
[c]{ccccc}%
(\mu_{1},\mu_{2},\mu_{3}) & 1_{\mathbf{B}^{3}} & g & u & v\\
(0,0,0) & (0,0,0) & (0,0,0) & (0,0,0) & (0,0,0)\\
(0,0,1) & (0,0,1) & (0,0,1) & (0,0,1) & (0,0,1)\\
(0,1,0) & (0,1,0) & (0,1,0) & (0,1,0) & (0,1,0)\\
(0,1,1) & (0,1,1) & (0,1,1) & (0,1,1) & (0,1,1)\\
(1,0,0) & (1,0,0) & (1,0,0) & (1,0,1) & (1,1,1)\\
(1,0,1) & (1,0,1) & (1,1,1) & (1,0,0) & (1,0,1)\\
(1,1,0) & (1,1,0) & (1,1,0) & (1,1,0) & (1,1,0)\\
(1,1,1) & (1,1,1) & (1,0,1) & (1,1,1) & (1,0,0)
\end{array}
\\
&  \;\;\;\;\;\;\;\;\;\;\;\;\;\;\;\;\;\;\;\;\;\;\;\;\;Table\;2
\end{align*}

\end{example}

\begin{example}
\label{Exa11}The function $\Phi:\mathbf{B}^{2}\rightarrow\mathbf{B}^{2}$
defined by $\forall\mu\in\mathbf{B}^{2},\Phi(\mu_{1},\mu_{2})=(\overline
{\mu_{1}},\overline{\mu_{2}})$ fulfills for $\nu\in\mathbf{B}^{2}:$%
\[
\Phi^{\nu}(\mu_{1},\mu_{2})=(\overline{\nu_{1}}\mu_{1}\oplus\nu_{1}%
\overline{\mu_{1}},\overline{\nu_{2}}\mu_{2}\oplus\nu_{2}\overline{\mu_{2}}),
\]%
\[
(\Phi^{\nu}\circ\Phi^{\nu})(\mu_{1},\mu_{2})=(\overline{\nu_{1}}\Phi_{1}%
^{\nu_{1}}(\mu_{1},\mu_{2})\oplus\nu_{1}\overline{\Phi_{1}^{\nu_{1}}(\mu
_{1},\mu_{2})}),
\]%
\[
\overline{\nu_{2}}\Phi_{2}^{\nu_{2}}(\mu_{1},\mu_{2})\oplus\nu_{2}%
\overline{\Phi_{2}^{\nu_{2}}(\mu_{1},\mu_{2})})
\]%
\[
=(\overline{\nu_{1}}(\overline{\nu_{1}}\mu_{1}\oplus\nu_{1}\overline{\mu_{1}%
})\oplus\nu_{1}(\overline{\nu_{1}}\mu_{1}\oplus\nu_{1}\overline{\mu_{1}}%
\oplus1),
\]%
\[
\overline{\nu_{2}}(\overline{\nu_{2}}\mu_{2}\oplus\nu_{2}\overline{\mu_{2}%
})\oplus\nu_{2}(\overline{\nu_{2}}\mu_{2}\oplus\nu_{2}\overline{\mu_{2}}%
\oplus1))
\]%
\[
=((\nu_{1}\oplus1)\mu_{1}\oplus\nu_{1}(\mu_{1}\oplus1)\oplus\nu_{1},(\nu
_{2}\oplus1)\mu_{2}\oplus\nu_{2}(\mu_{2}\oplus1)\oplus\nu_{2})
\]%
\[
=(\nu_{1}\mu_{1}\oplus\mu_{1}\oplus\nu_{1}\mu_{1}\oplus\nu_{1}\oplus\nu
_{1},\nu_{2}\mu_{2}\oplus\mu_{2}\oplus\nu_{2}\mu_{2}\oplus\nu_{2}\oplus\nu
_{2})=(\mu_{1},\mu_{2}),
\]
thus $(1_{\mathbf{B}^{2}},1_{\mathbf{B}^{2}})\in$ $^{\ast}\overline{Aut}%
(\Phi)$ and $\Phi$ is anti-symmetrical. The state portrait of $\Phi$ was drawn
in Figure \ref{simetrie30} c).
\end{example}

\begin{notation}
Let $\sigma:\{1,...,n\}\rightarrow\{1,...,n\}$ be a bijection. We use the
notation $\pi_{\sigma}:\mathbf{B}^{n}\rightarrow\mathbf{B}^{n}$ for the
bijection given by $\forall\mu\in\mathbf{B}^{n},$%
\[
\pi_{\sigma}(\mu_{1},...,\mu_{n})=(\mu_{\sigma(1)},...,\mu_{\sigma(n)}).
\]

\end{notation}

\begin{definition}
Any of $\widehat{\overline{\Xi}}_{\Phi},$ $\overline{\Xi}_{\Phi}$ and
$\Phi:\mathbf{B}^{n}\rightarrow\mathbf{B}^{n}$ is called \textbf{symmetrical
relative to the coordinates }if the bijection $\sigma$ exists, $\sigma
\neq1_{\{1,...,n\}}$ such that $(\pi_{\sigma},\pi_{\sigma})\in\overline
{Aut}(\Phi).$
\end{definition}

\begin{example}
\label{Exa17}We consider the function $\Phi:\mathbf{B}^{3}\rightarrow
\mathbf{B}^{3}$ defined by $\forall\mu\in\mathbf{B}^{3},$ $\Phi(\mu_{1}%
,\mu_{2},\mu_{3})=(\mu_{2}\mu_{3}\oplus\mu_{1}\oplus\mu_{2},\mu_{1}\mu
_{3}\oplus\mu_{2}\oplus\mu_{3},\mu_{1}\mu_{2}\oplus\mu_{1}\oplus\mu_{3})$ and
the permutation $\sigma:\{1,2,3\}\rightarrow\{1,2,3\},$ $\sigma=\left(
\begin{array}
[c]{ccc}%
1 & 2 & 3\\
\sigma(1) & \sigma(2) & \sigma(3)
\end{array}
\right)  =\left(
\begin{array}
[c]{ccc}%
1 & 2 & 3\\
3 & 1 & 2
\end{array}
\right)  .$ A group of symmetry of $\overline{\Xi}_{\Phi}$ is represented by
$G=\{(1_{\mathbf{B}^{3}},1_{\mathbf{B}^{3}}),$ $(\pi_{\sigma},\pi_{\sigma
}),(\pi_{\sigma\circ\sigma},\pi_{\sigma\circ\sigma})\}.$ We have given in
Figure \ref{simetrie14} the state portrait of $\Phi.$
\begin{figure}
[ptb]
\begin{center}
\fbox{\includegraphics[
height=0.8683in,
width=3.8424in
]%
{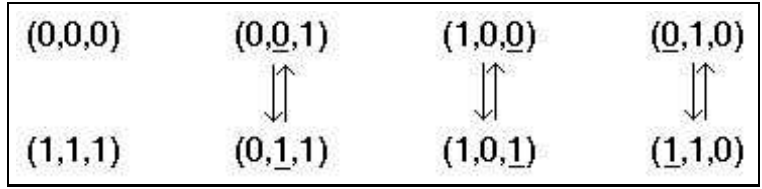}%
}\caption{System that is symmetrical relative to the coordinates, Example
\ref{Exa17}}%
\label{simetrie14}%
\end{center}
\end{figure}

\end{example}

\begin{notation}
For $\lambda\in\mathbf{B}^{n},$ we denote by $\theta^{\lambda}:\mathbf{B}%
^{n}\rightarrow\mathbf{B}^{n}$ the translation of vector $\lambda:$
$\forall\mu\in\mathbf{B}^{n},$%
\[
\theta^{\lambda}(\mu)=\mu\oplus\lambda.
\]

\end{notation}

\begin{definition}
If $(\theta^{\lambda},g^{\prime})\in\overline{Aut}(\Phi)$ holds for some
$(\theta^{\lambda},g^{\prime})\neq(1_{\mathbf{B}^{n}},1_{\mathbf{B}^{n}})$, we
say that any of $\widehat{\overline{\Xi}}_{\Phi},$ $\overline{\Xi}_{\Phi}$ and
$\Phi$ is \textbf{symmetrical relative to translations}.
\end{definition}

\begin{example}
\label{Exa13}In Figure \ref{simetrie10}
\begin{figure}
[ptb]
\begin{center}
\fbox{\includegraphics[
height=1.4702in,
width=2.6074in
]%
{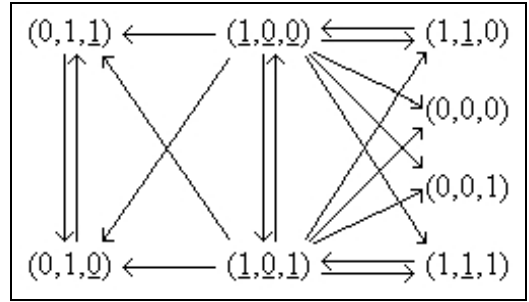}%
}\caption{$\Phi$ has the automorphism $(\theta^{(0,0,1)},1_{\mathbf{B}^{3}}),$
Example \ref{Exa13}}%
\label{simetrie10}%
\end{center}
\end{figure}
we have the system with $\Phi$ given by Table 3%
\begin{align*}
&
\begin{array}
[c]{cc}%
(\mu_{1},\mu_{2},\mu_{3}) & \Phi\\
(0,0,0) & (0,0,0)\\
(0,0,1) & (0,0,1)\\
(0,1,0) & (0,1,1)\\
(0,1,1) & (0,1,0)\\
(1,0,0) & (0,1,1)\\
(1,0,1) & (0,1,0)\\
(1,1,0) & (1,0,0)\\
(1,1,1) & (1,0,1)
\end{array}
\\
&  \;\;\;\;\;\;\;\;\;Table\;3
\end{align*}
and $(\theta^{(0,0,1)},1_{\mathbf{B}^{3}})\in\overline{Aut}(\Phi),$ as
resulting from the state portrait.
\end{example}

\begin{example}
\label{Exa14}In Table 4 we have a function $\Phi:\mathbf{B}^{2}\rightarrow
\mathbf{B}^{2}$ for which four functions $g_{1}^{\prime},g_{2}^{\prime}%
,g_{3}^{\prime},g_{4}^{\prime}:\mathbf{B}^{2}\rightarrow\mathbf{B}^{2}$ exist:%
\begin{align*}
&
\begin{array}
[c]{cccccc}%
(\mu_{1},\mu_{2}) & \Phi & g_{1}^{\prime} & g_{2}^{\prime} & g_{3}^{\prime} &
g_{4}^{\prime}\\
(0,0) & (0,0) & (0,0) & (1,0) & (0,0) & (1,0)\\
(0,1) & (0,1) & (0,1) & (0,1) & (1,1) & (1,1)\\
(1,0) & (1,1) & (1,0) & (0,0) & (1,0) & (0,0)\\
(1,1) & (1,0) & (1,1) & (1,1) & (0,1) & (0,1)
\end{array}
\\
&  \;\;\;\;\;\;\;\;\;\;\;\;\;\;\;\;\;\;\;Table\;4
\end{align*}
such that $(1_{\mathbf{B}^{2}},g_{1}^{\prime}),(1_{\mathbf{B}^{2}}%
,g_{2}^{\prime}),(1_{\mathbf{B}^{2}},g_{3}^{\prime}),(1_{\mathbf{B}^{2}}%
,g_{4}^{\prime})\in\overline{Aut}(\Phi).$ The state portrait of $\Phi$ is
drawn Figure \ref{simetrie8}.%
\begin{figure}
[ptb]
\begin{center}
\fbox{\includegraphics[
height=0.6382in,
width=1.2929in
]%
{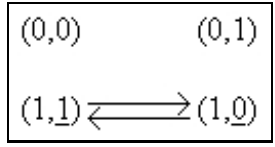}%
}\caption{$\Phi$ is symmetrical relative to translations with $(0,0),$ Example
\ref{Exa14}}%
\label{simetrie8}%
\end{center}
\end{figure}

\end{example}

\begin{example}
\label{Exe1}The system from Figure \ref{simetrie1} is symmetrical relative to
translations,%
\begin{figure}
[ptb]
\begin{center}
\fbox{\includegraphics[
height=0.6988in,
width=1.292in
]%
{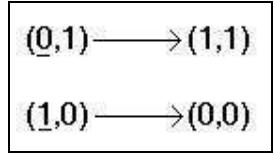}%
}\caption{Function $\Phi$ that is self dual, $(\theta^{(1,1)},1_{\mathbf{B}%
^{2}})\in\overline{Aut}(\Phi),$ Example \ref{Exe1}}%
\label{simetrie1}%
\end{center}
\end{figure}
since it has the group of symmetry $G=\{(1_{\mathbf{B}^{2}},1_{\mathbf{B}^{2}%
}),(\theta^{(1,1)},1_{\mathbf{B}^{2}})\}.$ $\Phi$ is self-dual $\Phi
=\Phi^{\ast},$ where the dual $\Phi^{\ast}$ of $\Phi$ is defined by
$\Phi^{\ast}(\mu)=\overline{\Phi(\overline{\mu})}.$
\end{example}

\begin{example}
\label{Exa15}Functions $\Phi:\mathbf{B}^{2}\rightarrow\mathbf{B}^{2}$ exist,
see Figure \ref{simetrie7}
\begin{figure}
[ptb]
\begin{center}
\fbox{\includegraphics[
height=2.4621in,
width=3.0191in
]%
{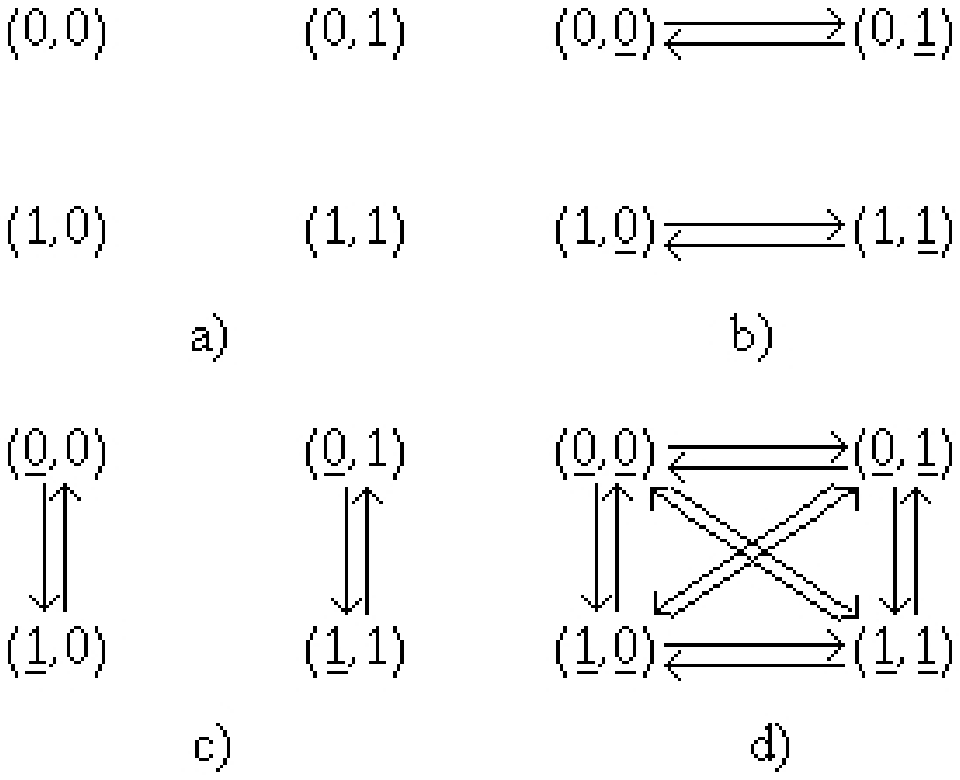}%
}\caption{Functions $\Phi$ that are self dual, $(\theta^{(1,1)},1_{\mathbf{B}%
^{2}})\in\overline{Aut}(\Phi),$ Example \ref{Exa15}}%
\label{simetrie7}%
\end{center}
\end{figure}
that are invariant relative to the translations with any $\lambda\in
\mathbf{B}^{2},$ thus their group of symmetry is $G=\{(1_{\mathbf{B}^{2}%
},1_{\mathbf{B}^{2}}),(\theta^{(0,1)},1_{\mathbf{B}^{2}}),(\theta
^{(1,0)},1_{\mathbf{B}^{2}}),(\theta^{(1,1)},1_{\mathbf{B}^{2}})\}.$ The fact
that $(\theta^{(1,1)},1_{\mathbf{B}^{2}})\in G$ shows that all these
functions: $\Phi(\mu)=(\mu_{1},\mu_{2}),$ $\Phi(\mu)=(\mu_{1},\overline
{\mu_{2}}),$ $\Phi(\mu)=(\overline{\mu_{1}},\mu_{2}),$ $\Phi(\mu
)=(\overline{\mu_{1}},\overline{\mu_{2}})$ are self-dual, $\Phi=\Phi^{\ast}.$
\end{example}

\begin{example}
\label{Exe4}The group of symmetry $G$ of the system from Figure
\ref{simetrie4}
\begin{figure}
[ptb]
\begin{center}
\fbox{\includegraphics[
height=1.0672in,
width=1.8144in
]%
{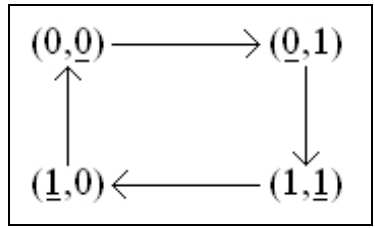}%
}\caption{Symmetry including symmetry relative to translations, Example
\ref{Exe4}}%
\label{simetrie4}%
\end{center}
\end{figure}
has four elements given by%
\begin{align*}
&
\begin{array}
[c]{ccccc}%
(\mu_{1},\mu_{2}) & 1_{\mathbf{B}^{2}} & g & h & \theta^{(1,1)}\\
(0,0) & (0,0) & (0,1) & (1,0) & (1,1)\\
(0,1) & (0,1) & (1,1) & (0,0) & (1,0)\\
(1,0) & (1,0) & (0,0) & (1,1) & (0,1)\\
(1,1) & (1,1) & (1,0) & (0,1) & (0,0)
\end{array}
\\
&  \;\;\;\;\;\;\;\;\;\;\;\;\;\;\;\;Table\;5
\end{align*}
and we remark that $h=g^{-1},\theta^{(1,1)}=(\theta^{(1,1)})^{-1}$ hold. On
the other hand%
\begin{align*}
&
\begin{array}
[c]{ccccc}%
(\nu_{1},\nu_{2}) & (_{1_{\mathbf{B}^{2}}})^{\prime} & g^{\prime} & h^{\prime}
& (\theta^{(1,1)})^{\prime}\\
(0,0) & (0,0) & (0,0) & (0,0) & (0,0)\\
(0,1) & (0,1) & (1,0) & (1,0) & (0,1)\\
(1,0) & (1,0) & (0,1) & (0,1) & (1,0)\\
(1,1) & (1,1) & (1,1) & (1,1) & (1,1)
\end{array}
\\
&  \;\;\;\;\;\;\;\;\;\;\;\;\;\;\;\;Table\;6
\end{align*}
$G$ has a proper subgroup $G^{\prime}=\{(1_{\mathbf{B}^{2}},1_{\mathbf{B}^{2}%
}),(\theta^{(1,1)},1_{\mathbf{B}^{2}})\},$ showing that $\Phi=\Phi^{\ast}$
like previously.
\end{example}

\section{Conclusions}

The paper defines the universal semi-regular autonomous asynchronous systems
and the universal anti-semi-regular autonomous asynchronous systems. It also
defines and characterizes the isomorphisms (automorphisms) and the
anti-isormorphisms (anti-automorphisms) of these systems. Symmetry is defined
as the existence of such \ isomorphisms (automorphisms), while anti-symmetry
is defined as the existence of such anti-isomorphisms (anti-automorphisms).
Many examples are given. A by-pass product in this study is anti-symmetry,
that is related with systems having the cause in the future and the effect in
the present. Another by-pass product consists in semi-regularity, since
important examples of isomorphisms (automorphisms) are of semi-regular systems
only, they do not keep progressiveness and regularity \cite{bib2}, \cite{bib3}.

\end{document}